\documentclass[aps,prl,a4paper,showpacs,twocolumn,twoside,superscriptaddress]{revtex4-1}

\usepackage{textcomp,hyperref,enumerate,url,cancel}   
\usepackage[margin=0.8in]{geometry}
\usepackage{color} 
\usepackage{amsmath,amssymb,amsthm,bbm,bm,nicefrac}   				
\usepackage{graphicx,epsfig}                                                               		
\usepackage{multirow}
\usepackage{subfigure}

\newcommand{\ssection}[1]{\emph{#1.---}}

\newcommand{\ket}[1]{\left|#1\right\rangle}							
\newcommand{\bra}[1]{\left\langle#1\right|}

\newcommand{\ketbra}[2]{\left|#1\rangle\!\langle#2\right|}
\newcommand{\braket}[2]{\left\langle #1\lvert#2\right\rangle}

\newcommand{\mean}[1]{\left\langle #1\right\rangle}

\newcommand{\be}{\begin{equation}} 							
\newcommand{\ee}{\end{equation}}
\newcommand{\ba}{\begin{align}}
\newcommand{\ea}{\end{align}}
\newcommand{\bematrix}{\left(\begin{matrix}}
\newcommand{\ematrix}{\end{matrix}\right)}

\theoremstyle{definition}

\theoremstyle{theorem}
\newtheorem{theorem}{Theorem}

\theoremstyle{lemma}

\theoremstyle{proposition}

\theoremstyle{corollary}
\newtheorem{corollary}[theorem]{Corollary}

\theoremstyle{observation}

\theoremstyle{remark}

\newcommand{\1}{\openone}

\def\R{{\ensuremath{\mathbb R}}}

\def\spn{\operatorname{span}}
\def\conv{\operatorname{conv}}
\def\aff{\operatorname{aff}}
\def\diag{\operatorname{diag}}
\newcommand{\tr}{\operatorname{tr}}







\def\cC{\mathcal C}

\def\cF{\mathcal F}

\def\cH{\mathcal H}

\def\cL{\mathcal L}

\def\cP{\mathcal P}
\def\cQ{\mathcal Q}



\newcommand{\aw}[1]{{#1}}

\newcommand{\lle}[1]{{#1}}

\newlength{\blank}
\settowidth{\blank}{\emph{~}}
\newenvironment{proofof}[1][{\hspace{-\blank}}]{{\medskip\noindent\textbf{Proof~{#1}.\ }}}{\hfill\qed}

\begin{document}

\title{All tight correlation Bell inequalities have quantum violations}

\author{\aw{Lloren\c{c}} Escol\`a}
\affiliation{F\'isica Te\`orica: Informaci\'o i Fen\`omens Qu\`antics, Departament de F\'isica, Universitat Aut\`onoma de Barcelona, 08193 Bellaterra (Barcelona) Spain}
\email{llorensescola@gmail.com, john.calsamiglia@uab.cat}

\author{\aw{John} Calsamiglia}
\affiliation{F\'isica Te\`orica: Informaci\'o i Fen\`omens Qu\`antics, Departament de F\'isica, Universitat Aut\`onoma de Barcelona, 08193 Bellaterra (Barcelona) Spain}
\email{john.calsamiglia@uab.cat}

\author{\aw{Andreas} Winter}
\affiliation{F\'isica Te\`orica: Informaci\'o i Fen\`omens Qu\`antics, Departament de F\'isica, Universitat Aut\`onoma de Barcelona, 08193 Bellaterra (Barcelona) Spain}
\affiliation{ICREA---Instituci\'o Catalana de Recerca i Estudis Avan\c{c}ats,
Pg.~Lluis Companys, 23, 08010 Barcelona, Spain}
\email{andreas.winter@uab.cat}

\date{18 August 2019}

\begin{abstract}
It is by now well-established that there exist non-local games for which the 
best entanglement-assisted performance is not better than the best classical 
performance. Here we show in contrast that any two-player XOR game, for which 
the corresponding Bell inequality is tight, has a quantum advantage. 
In geometric terms, this means that any correlation Bell inequality for which 
the classical and quantum maximum values coincide, does not define a facet, 
i.e. a face of maximum dimension, of the local (Bell) polytope. 
Indeed, using semidefinite programming duality, we prove upper bounds on the 
dimension of {these faces}, bounding it far away from the maximum. In 
the special case of non-local computation games, {it had been shown before
that they are not facet-defining;} our result generalises and improves this.
{As a by-product of our analysis, we find a similar upper bound on the
dimension of the faces of the convex body of quantum correlation matrices, 
showing that (except for the trivial ones expressing the non-negativity of 
probability) it does not have facets.}
\end{abstract}

\maketitle

\ssection{Introduction}
In 1964, Bell \cite{BellTheorem} proved that some predictions of quantum theory 
regarding the correlations between distant events cannot be explained by any 
\aw{classical, i.e. local realistic} theory. He derived a simple observable 
criterion that any classical theory must obey, and showed that particular 
measurements performed by two parties on a maximally entangled state could violate 
it. \aw{What we now call a Bell inequality was introduced in \cite{CHSHArticle}, 
as an upper bound on a single linear function of observable probabilities, i.e. 
an operational expectation value.}
This quantity has been experimentally measured \cite{aspect_experimental_1982,aspect_closing_2015} 
and \aw{shown} to exceed the classical upper bound, and thereby elevated Bell's 
theorem to one of the \aw{deepest} results in science, with a momentous impact on 
the way we understand the physical world. Quantum entanglement is responsible for 
\aw{these} observed correlations and it is also the key ingredient in most of the 
quantum \aw{informational advantage} in computation, communication, and sensing 
applications. Non-locality on its own has also been identified as a valuable resource 
in applications such as secure key distribution \cite{acin_device-independent_2007}, 
certified randomness \cite{pironio_random_2010}, 
reduced communication complexity \cite{buhrman_nonlocality_2010},
self-testing \cite{mayers_self_2004,supic_self-testing_2016},
and computation \cite{anders_computational_2009}.

In order to advance in the fundamental understanding of the perplexing features 
of non-local correlations and their technological spin-offs, \aw{in the last 
decades important efforts have been devoted} to their characterisation and 
exploitation (see \cite{brunner_bell_2014} for a recent review). 
Tsirelson \cite{cirelson_quantum_1980} computed the maximal violation of the 
CHSH inequality \cite{CHSHArticle} attainable by quantum mechanics;
later, Popescu and Rohrlich \cite{popescu_quantum_1994} 
(see also \cite{tsirelson_1993})
showed that although quantum correlations 
belong to the set of no-signalling correlations ---they do not allow for instantaneous 
communication---, they do not attain the full strength \aw{allowed in principle 
by the no-signalling condition}. 
These results reveal astonishing features \aw{of} the convex 
sets of classical ($\mathcal{C}$), quantum ($\mathcal{Q}$) and no-signalling 
($\mathcal{NS}$) correlations, in particular the strict inclusion 
$\mathcal{C}\subset \mathcal{Q}\subset \mathcal{NS}$. However, a lot remains to 
be understood, \aw{both} at the conceptual and mathematical level. For instance, 
the fact that $\mathcal{Q}\subset\mathcal{NS}$ spurred the search for underlying 
\aw{operational principles} that would single out quantum correlations 
\aw{among} general no-signalling 
ones \cite{pawlowski_information_2009,navascues_miguel_glance_2010,navascues_almost_2015}.
An approach that may assist in identifying such operationally defined principles 
and that may unveil new applications of non-locality is based on cooperative games 
of incomplete information \cite{cleve_Procconsequences_2004}, where two (or more)
remote parties cooperate to win a probabilistic game against a referee. Indeed, 
an increased winning probability when the two parties use quantum resources (quantum 
entanglement) instead of classical ones (which includes shared randomness), 
is equivalent to the violation of a Bell inequality. 
Gill \cite{gill_bell, krueger_open_2005}
asked the fruitful question whether all \emph{tight} Bell inequalities are violated 
by quantum mechanics. Here, tightness means that the inequality cannot be expressed 
as a positive linear combination of other Bell inequalities, or in geometric 
terms, that the Bell inequality defines a facet of the polytope of classical correlations 
(see below). 
Linden \textit{et al.} \cite{linden_quantum_2007} (motivated by \cite{PR-limits})
\aw{found the first class of two-player games, called} 
\emph{non-local computation (NLC)}, that \aw{have} no quantum advantage; 
the tightness of their Bell inequalities was posed as an open question 
in \cite{linden_quantum_2007}.
Almeida \textit{et al.} \cite{almeida_guess_2010} presented another case, 
\aw{the multi-party \emph{guess your neighbour's input (GYNI)}} game, 
\aw{shown to define} a tight Bell inequality without quantum violation. 
Around the same time, \aw{it was understood that NLC games never define facets of the 
Bell polytope \cite{winter_quantum_2010}, though this result was never written up; 
a proof was eventually published by Ramanathan \textit{et al.} \cite{ramanathan_tightness_2017}}. 

In this Letter we \aw{prove} that XOR games never define a facet of the Bell 
polytope (thus extending the result for NLC), answering Gill's question in the 
affirmative for the correlation polytope: all nontrivial tight correlation Bell 
inequalities have quantum violations.
The \aw{remainder of this Letter is} structured as follows: i) we first introduce 
the general formalism to describe the set of no-signalling, local classical and 
quantum correlations; ii) we briefly present the XOR games and give general expressions 
for the winning probabilities under different locality scenarios; iii) we present 
our main theorem and {the main ideas of} its proof; 
iv) we extend our result to the quantum set 
of correlations, and conclude with a discussion and outlook. 
In the Supplementary Material we {give the full proofs of our result, and 
present a simplified analysis in} the particular case of NLC, reconstructing the 
argument alluded to in \cite{winter_quantum_2010}, and improving 
\cite{ramanathan_tightness_2017} by giving a bound on the dimension of the face. 
 
\begin{figure}[t]
\centering
{\includegraphics[width=50mm]{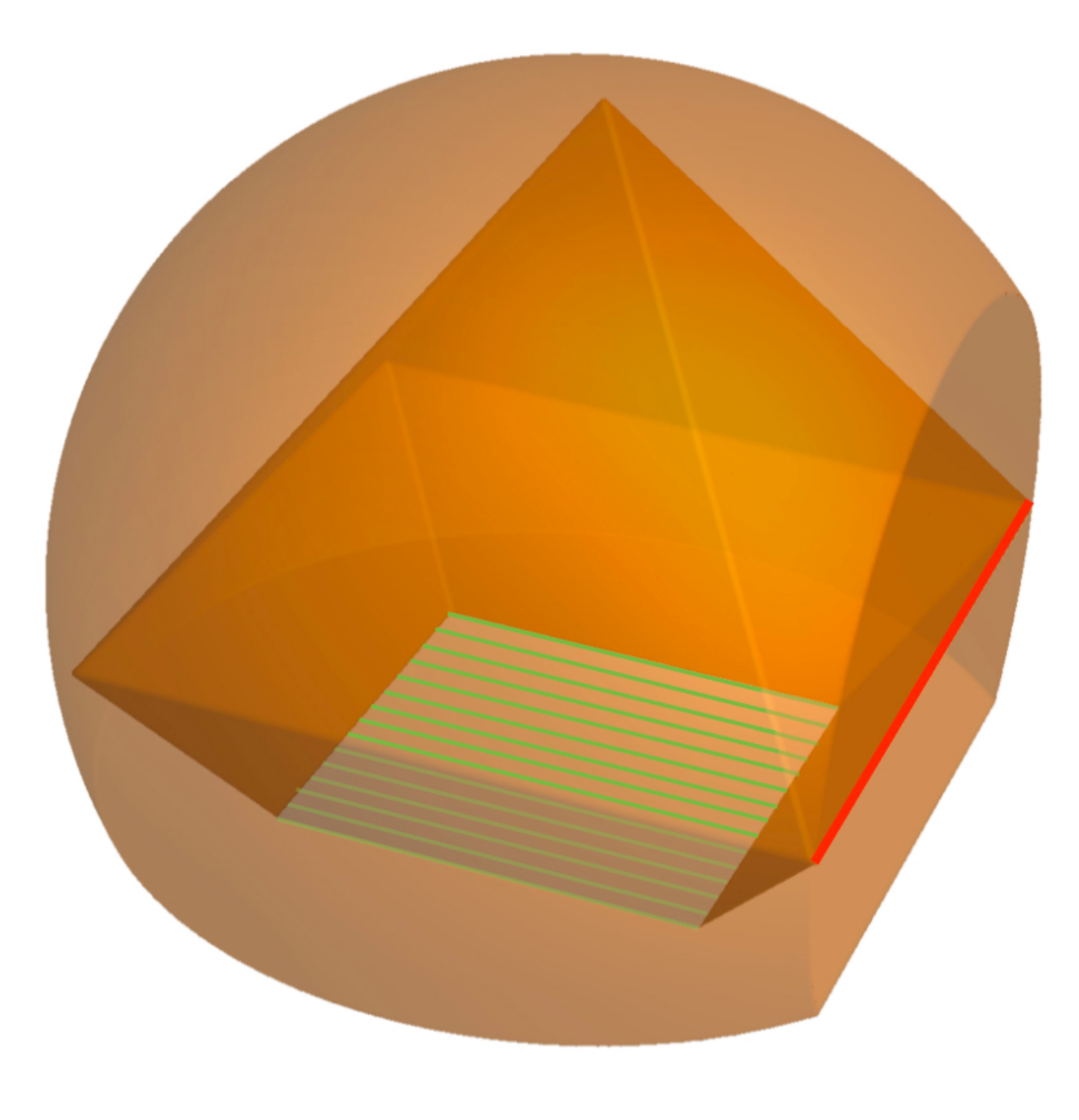}}
\caption{{Three-dimensional schematic} of a local \lle{correlation} polytope and the convex 
         body {of} quantum correlations, illustrating different types 
         of regions where local and quantum boundaries coincide (corresponding to XOR 
         games without  quantum advantage): the green-striped region is a \textit{facet} 
         (face of dimension $2$) of the Bell polytope, while the red line is a \textit{face} 
         (not a facet). 
         Theorem \ref{Our Theorem 1} excludes the former, and hence implies 
         that tight Bell inequalities (all facets) have quantum violations. 
         \lle{Theorem \ref{Our Theorem3} implies that all facets of the set of 
         quantum correlations are trivial}.}
\label{Punts}
\end{figure}

\medskip

\ssection{\aw{No-signalling behaviours}}
Consider a bipartite system where two parties, Alice and Bob, can perform 
measurements $x\in[m_A]$ and $y\in[m_B]$, respectively, obtaining the respective 
outcomes $a$ and $b$, which are binary. The event of obtaining $a$ and $b$ when 
the local measurements $x$ and $y$ are performed, is given according a 
conditional probability $p(a,b|x,y)$. 
In order to avoid instantaneous communication between the distant parties, 
forbidden by special relativity, this probability $p$ must satisfy the 
\aw{\emph{no-signalling property}}, i.e. $\sum_{b}p(a,b|x,y) = \sum_{b}p(a,b|x,y') \quad \forall a, x, y, y'$
and analogously summing Alice's outcomes, which {in physical terms excludes} 
that any party signals to another party by their choice of input. 

The set of all probabilities satisfying the above no-signalling property, called the 
no-signalling set ($\mathcal{NS}$), constitutes a polytope of dimension 
$D=m_Am_B+m_A+m_B$ \cite{pironio_lifting_2005}. 
In an attempt to explain phenomena locally, one may \aw{consider} the existence 
of classical local (hidden) variables $\lambda\in\Lambda$, distributed according 
a probability law $\rho(\lambda)$, such that the probability of an observed event 
can be written as $
  p(a,b|x,y) = \int_{\Lambda}d\lambda \rho(\lambda)p(a|x,\lambda)p(b|y,\lambda).
$
The set of all probabilities of this form is called the local 
set, which is also a polytope, the so-called Bell or local polytope, of the same dimension 
\aw{as} the no-signalling polytope \cite{pironio_lifting_2005}, and denoted $\mathcal{C}$. 
A  polytope 
\aw{$\cP$ can equivalently be defined as the convex hull of a finite set of points,
$\cP = \conv\{v_j : j=1,\ldots,k\}$,
or as a bounded intersection of finitely many closed half-spaces,
$\cP = \{\vec{v} : \forall i=1,\ldots,\ell\ \vec{u}_i\cdot\vec{v}\leq w_i\}$ \cite{ConvexPolytopes}. 
A linear inequality for $\cP$ is a $\vec{u}\cdot\vec{v}\leq w$ that holds
for all $\vec{v}\in\cP$; in geometry, $H = \{\vec{v} : \vec{u}\cdot\vec{v} = w\}$ 
is also called a \emph{supporting hyperplane} of $\cP$.
For a given supporting hyperplane, the set of point $\vec{v}\in\cP$ achieving the 
equality is called a \emph{face} of the polytope $\cP$, $\cF = \cP \cap H$.}
In the case of $\mathcal{C}$, its corresponding inequalities are precisely the 
Bell inequalities. The faces of maximum dimension $D-1$ are called facets; when 
dealing with $\mathcal{C}$, the corresponding inequalities are called tight,
or facet, Bell inequalities.	
Facet inequalities give the minimal characterisation of the polytope in terms of 
half-spaces in the sense that any other inequality that holds for the polytope 
can be written as a \aw{positive} linear combination of the facet inequalities.

To state and prove our results, we use a convenient minimal parametrisation 
of the no-signalling polytopes.
Any no-signalling \emph{behaviour} $p(a,b|x,y)$ is fully characterised 
by the first moments \mbox{$\alpha_{x}=\mean{(-1)^{a}}_{x,y}=\sum_{a,b} (-1)^{a} p(a,b|x,y)$}
and \mbox{$\beta_{y}=\mean{(-1)^{b}}_{x,y}=\sum_{a,b} (-1)^{b} p(a,b|x,y)$} (which,  
due to {no-signaling property}, are independent of $y$ and $x$, resp.);
and the correlators $c_{xy}=\mean{(-1)^{a+b}}_{x,y}=\sum_{a,b} (-1)^{a+b} p(a,b|x,y)$. 
Indeed, from these $D=m_Am_B+m_A+m_B$ values we recover
$4 p(a,b|x,y)= 1+(-1)^{a} \alpha_{x}+ (-1)^{b} \beta_{y}+(-1)^{a+b} c_{xy}$.
The polytope of $\mathcal{NS}$ distributions can hence be described by the tuple
$(\ket{\alpha},\ket{\beta},C) \in \R^D$, 
where $\ket{\alpha} = \sum_{x} \alpha_{x}\ket{x} \in \R^{m_A}$
and $\ket{\beta} = \sum_{y} \beta_{y}\ket{y} \in \R^{m_B}$ 
are the local moment vectors,
and $C= \sum_{x,y} c_{xy}\ketbra{x}{y}$ is the correlation matrix.
The local, or Bell, polytope arises when restricting the strategies to convex 
combinations of local deterministic ones \cite{fine_hidden_1982}. 
We use the subscript $c$ to label such extremal classical strategies 
$\ket{\alpha_c}\in\{-1,1\}^{m_A}$, $\ket{\beta_c}\in\{-1,1\}^{m_B}$, for which 
we note that $c_{xy}=\alpha_{x}\beta_{y}$, that is $C=|\alpha_{c}\rangle\!\langle\beta_{c}|$. 
The Bell polytope is then given by the convex hull
\begin{equation}
\label{1st span}
  \cC 
   = \conv\left\{\left(\ket{\alpha_{c}},\ket{\beta_{c}},|\alpha_{c}\rangle\!\langle\beta_{c}|\right)\right\}.
\end{equation}

\aw{Finally, there are at least two definitions of sets of quantum behaviours 
that we have to consider: the most general setting is of a state $\ket{\psi}$ in a
Hilbert space $\cH$, together with Alice's and Bob's observables $\hat{a}_{x}$ 
and $\hat{b}_{y}$, respectively, with eigenvalues $0$ and $1$ (i.e. they
are projectors), and such that for all $x$, $y$, $[\hat{a}_{x},\hat{b}_y]=0$.
Then, $p(a,b|x,y)=\bra{\psi}\hat{a}_{x}\hat{b}_y\ket{\psi}$, and hence in the
above parametrisation for $\mathcal{NS}$, we have  
$\alpha_{x}=\bra{\psi} (-1)^{\hat{a}_{x}} \ket{\psi}$, 
$\beta_{x}=\bra{\psi}(-1)^{\hat{b}_{y}} \ket{\psi}$ and 
$c_{xy}=\bra{\psi} (-1)^{\hat{a}_{x}}(-1)^{\hat{b}_{y}} \ket{\psi}$. 
The set of such behaviours is denoted $\cQ_{\text{com}}$, the subscript standing for
``commuting'' strategies, and it is known to be a closed convex set contained in
$\mathcal{NS}$.
The other, traditionally considered setting is that $\cH=\cH_A\otimes\cH_B$
is a tensor product Hilbert space, and that $\hat{a}_{x} = \hat{a}^A_{x}\otimes\1_B$
and $\hat{b}_{y} = \1_A\otimes\hat{b}^B_{y}$, with observables $\hat{a}^A_{x}$
on $\cH_A$ and $\hat{b}^B_{y}$ on $\cH_B$. The corresponding set of behaviours
is convex, but recently has been shown not to be closed \cite{Slofstra}
for $m_A,m_B\geq 5$ \cite{Paulsen}, which is why we define $\cQ_\otimes$ to be its 
closure. By definition, $\cQ_\otimes \subseteq \cQ_{\text{com}}$, and while it is open 
whether the two sets are equal, this would be equivalent to Connes' long-standing
Embedding Problem in the theory of von Neumann algebras \cite{junge_connes_2011,ozawa_about_2004}.}
The sets of quantum behaviours are convex sets, but unlike the classical and 
no-signalling sets, they are not polytopes: they have uncountably many 
extremal points, and part of their boundary is curved.

The study of non-local correlations is often \aw{carried out in a} simplified 
scenario, the so-called \emph{correlation polytope}, which is given by the set of 
correlators $C$ (without including the local terms). The corresponding linear 
criteria that define the \aw{set of classical/local correlations} $\cC_0$ are 
called \emph{correlation Bell inequalities} 
\cite{Froissart81,fine_hidden_1982,werner_wolf_2001}. 
The projection of quantum and no-signalling behaviours 
onto the correlator subspace are {the quantum  $\cQ_0$ and no-signalling $\mathcal{NS}_0$ correlations respectively.
Note that by Tsirelson's results \cite{cirelson_quantum_1980}, both $\cQ_{\text{com}}$
and $\cQ_\otimes$ give rise to the same quantum correlator set, realized in fact
with local Hilbert spaces $\cH_A$ and $\cH_B$ of bounded dimension.

\medskip
\ssection{XOR games}
Non-local  games provide an {intuitive operational} setting in which to 
cast Bell inequalities, and relate those to the {well-established} field of interactive 
proofs in computer science. Here we will focus on {the particular class of two-player}
XOR games \cite{cleve_Procconsequences_2004}, where the outcomes of each party are 
binary and the winning condition depends on the exclusive disjunction (XOR) of the outcomes. 
XOR games have a prominent role in non-locality: the paradigmatic CHSH 
inequality \cite{CHSHArticle}, the GHZ paradox \cite{greenberger_bells_1990} 
and NLC \cite{linden_quantum_2007} can all be phrased as XOR games; and most importantly, they 
provide a characterisation of the correlation Bell polytope as will become
apparent below.

\begin{figure}[ht]
  \centering
  \includegraphics[width=55mm]{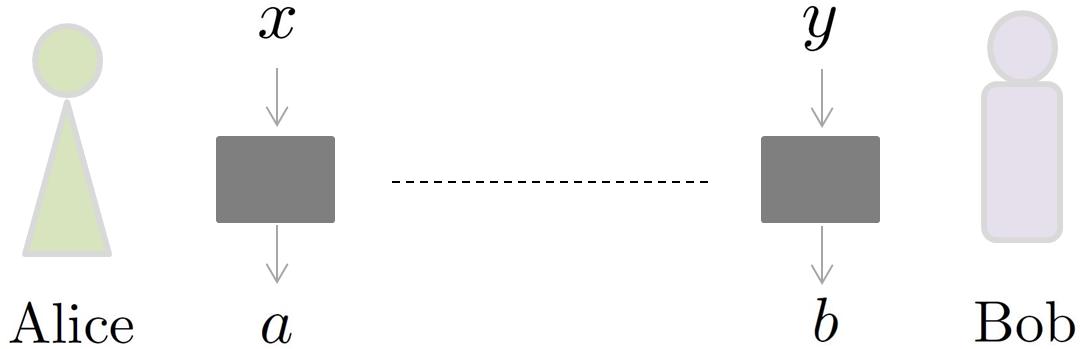}
  \caption{Representation of a XOR game. The goal is that Alice and Bob output 
           $a$ and $b$ such that $a\oplus b=f(x,y)$.}
  \label{NLC game figure}
\end{figure}

In an XOR game, see Fig. \ref{NLC game figure}, the referee provides queries 
$x\in [m_{A}]$ to one player (Alice) and $y\in [m_{B}]$ to the second player (Bob), 
with the promise that queries are sampled from some prior probability distribution 
$q(x,y)$ known to both players. In order to win the game, upon receiving their 
inputs $x$ and $y$, Alice and Bob must produce a binary output $a,b\in\{0,1\}$, 
respectively, such that $a\oplus b=f(x,y)$,
where $f$ is a given Boolean function also known to both players. The performance 
of Alice and Bob's strategy is quantified by the average winning probability 
\begin{equation}
  \label{w def}
  \omega = \sum_{x,y} q(x,y)p\!\left(a\oplus b\!=\!f(x,y)|x,y\right)= \frac12 (1+\xi), 
 \end{equation}
where $\xi = \sum_{x,y} q(x,y)(-1)^{f(x,y)} c_{xy}$
%
is the gain {(or bias)}. Note that XOR games can always be won with at least 
probability $\frac12$ if Alice (or Bob) produces a random output independently 
of the input. Since  $q(x,y)$ and $f(x,y)$ are a given, we can characterise 
the game by the \aw{so-called} game matrix
\mbox{$
  \Phi = \sum_{x,y}(-1)^{f(x,y)}q(x,y)|x\rangle\!\langle y|,
$}
so that the gain can be written in terms of the 
correlation matrix as $\xi=\tr{C\Phi^{T}}$. \aw{Every correlation Bell inequality,
as it is based on a linear function of the correlators $C$, can be written
in the form $\tr{C\Phi^{T}} \leq \xi$, and by rescaling if necessary,
$\Phi$ can be chosen as the game matrix of a suitable XOR game.} 
The optimal classical success probability can always be attained by 
extremal \aw{(i.e. deterministic)} strategies $|\alpha_c\rangle$ and $\ket{\beta_c}$,  
with $C=|\alpha_{c}\rangle\!\langle\beta_{c}|$. 
Hence the gain of the local classical average winning probability can be written as:
\begin{equation}
  \label{eq:cl-gain}
  \xi_{c} = \max_{\alpha_{c},\beta_{c}} {\langle\alpha_{c}|}\Phi|\beta_c\rangle.
\end{equation}
In the Supplementary Material (A) we present various useful ways to 
write the quantum gain, which are employed in the proofs of our main results.

It is easy to see that no-signalling behaviours  allow to win XOR games 
with $\omega_{NS}=1$ \cite{ramanathan_characterizing_2014}, and therefore any XOR 
game with $\omega_{c}<1$ will correspond to a nontrivial Bell inequality, \aw{i.e.
one that can potentially be violated quantumly}. See the Supplementary Material
(B) for further discussion of this point.

\aw{Observe finally that without loss of generality, we may restrict ourselves 
to XOR games} with game matrices $\Phi$ that have no all-zero rows or columns. Indeed, 
because such a row or column of zeros imply that the marginal $q(x)$ or $q(y)$ 
are zero for some inputs, we can redefine the set of possible queries 
(decreasing $m_{A}$ or $m_{B}$ accordingly) to obtain an equivalent game without 
all-zero rows or columns in its game matrix. We refer to such games with 
$q(x)>0$ and $q(y)>0$ for all $x\in[m_{A}]$ and $y\in[m_{B}]$ as \emph{exhaustive games}.



\medskip

\ssection{Results}
\aw{For a long time, it was implicitly assumed that if Alice and Bob use entangled 
strategies,} they can attain a greater success probability than if they are limited 
to classical resources, \aw{for any nontrivial Bell inequality.} 
As explained in the introduction, \aw{it took a while to find examples of nontrivial games 
that do not show any quantum advantage}. 

Here we show that XOR games (which characterise the correlation polytope) 
without quantum advantage never define a facet of the Bell polytope 
({full behaviours or correlations}).
This in turn implies that all (nontrivial) tight correlation Bell inequalities 
have quantum violations.

In  \cite{ramanathan_characterizing_2014} Ramanathan \textit{et al.}
 derived a necessary and sufficient condition for a two-player XOR game to have 
no quantum advantage, which will turn out to be fundamental for the proof of our first
 result. 

\begin{theorem}
  \label{Our Theorem 1} 
  If an exhaustive XOR game has no quantum advantage, 
  the corresponding Bell inequality \aw{does not define a facet of the 
  Bell polytope, nor of the correlation Bell polytope}.
\end{theorem}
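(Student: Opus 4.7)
The plan is to exploit semidefinite programming (SDP) duality for the quantum bias $\xi_q$. Following Tsirelson, the quantum value can be written as the SDP
\begin{equation*}
\xi_q = \max\bigl\{\tr(G W) : G\succeq 0,\ G_{ii}\leq 1\bigr\},\qquad W = \tfrac{1}{2}\begin{pmatrix} 0 & \Phi \\ \Phi^T & 0\end{pmatrix},
\end{equation*}
whose dual is
\begin{equation*}
\xi_q = \min\bigl\{\tr\Lambda + \tr M : \Lambda,\,M\ \text{diagonal},\ N := \begin{pmatrix} \Lambda & -\Phi/2 \\ -\Phi^T/2 & M\end{pmatrix}\succeq 0\bigr\}.
\end{equation*}
The no-quantum-advantage hypothesis $\xi_q = \xi_c$ combined with strong duality produces dual optimisers $(\Lambda,M)$ achieving $\xi_c$. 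A first observation is that for an exhaustive game the diagonal entries $\lambda_x,\mu_y$ must be strictly positive, since a vanishing one would force a whole row of the PSD matrix $N$ to be zero, contradicting the non-vanishing of the corresponding row or column of $\Phi$.

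The core step is complementary slackness. Any optimal classical strategy $(\alpha_c,\beta_c)$ produces a rank-one primal-feasible solution $G_c$ which, by the no-advantage hypothesis, is also primal-optimal. The condition $(N-W)G_c = 0$ then yields the vector equations
\begin{equation*}
2\Lambda\ket{\alpha_c} = \Phi\ket{\beta_c}, \qquad 2M\ket{\beta_c} = \Phi^T\ket{\alpha_c},
\end{equation*}
which hold for every optimal classical strategy, and therefore, by convex combination, for the $(\alpha,\beta)$-part of every point in the face $F$ of the Bell polytope $\cC$ defined by $\tr(C\Phi^T)=\xi_c$. This is essentially a re-derivation of the Ramanathan \emph{et al.} condition.

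To finish for the Bell polytope, note that the $m_A+m_B$ linear equations above involve only the marginal coordinates $(\alpha,\beta)$, whereas the defining equation $\tr(C\Phi^T)=\xi_c$ involves only the correlator coordinates $C$. Exhaustiveness guarantees that the former are non-trivial, hence at least one is linearly independent from the defining equation, giving $\dim F\leq D-2$ and ruling out a facet; a rank analysis of the coefficient matrix $\bigl(\begin{smallmatrix}2\Lambda & -\Phi \\ -\Phi^T & 2M\end{smallmatrix}\bigr)$ then yields the sharper dimension bound advertised in the abstract. For the correlation polytope, one inverts the first vector equation (allowed since $\Lambda>0$) to get $\ket{\alpha_c} = (2\Lambda)^{-1}\Phi\ket{\beta_c}$, so every extremal point of the face $F_0$ has the form $(2\Lambda)^{-1}\Phi\,\ketbra{\beta_c}{\beta_c}$. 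The matrices $\ketbra{\beta}{\beta}$ for $\beta\in\{\pm 1\}^{m_B}$ lie in the $\binom{m_B}{2}$-dimensional affine space of symmetric matrices with unit diagonal, so applying the linear map $B\mapsto (2\Lambda)^{-1}\Phi B$ gives $\dim F_0\leq\binom{m_B}{2}$, and symmetrically $\binom{m_A}{2}$, both strictly below $m_A m_B-1$ whenever $m_A,m_B\geq 2$.

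The main technical subtlety is justifying strong duality for the SDP (which holds by Slater's condition, with a strictly feasible primal) and carefully deriving the vector equations from the complementary slackness condition $(N-W)G_c = 0$ rather than its weaker trace form. Once that is set up, the independence of the complementary-slackness equations from the defining linear functional reduces to the trivial observation that they live on disjoint blocks of coordinates; and the dimension bound for the correlation polytope reduces to bounding the affine dimension of a set of rank-one matrices with unit diagonal, which is classical.
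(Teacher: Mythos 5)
Your proof is correct and follows essentially the same route as the paper: the SDP formulation of the quantum bias, strong duality with complementary slackness against an optimal classical rank-one Gram matrix, the linear relations $2\Lambda\ket{\alpha_c}=\Phi\ket{\beta_c}$ and $2M\ket{\beta_c}=\Phi^T\ket{\alpha_c}$ (and strict positivity of $\Lambda,M$ for exhaustive games), and the dimension count via the affine hull of unit-diagonal rank-one matrices $\ketbra{\beta_c}{\beta_c}$. The only blemish is a notational slip in the complementary slackness condition, which should read $N G_c = 0$ (with $N$ your dual block matrix, already equal to $T-W$) rather than $(N-W)G_c=0$; since the vector equations you then write down are the correct ones, this does not affect the argument.
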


{The \emph{proof} -- see the Supplementary Material (C) for full details -- 
proceeds by bounding the dimension of the face $\mathcal{F}$ in the Bell polytope 
corresponding to the maximum classical  bias $\xi_{c}$ of the given 
XOR game: 
\[\begin{split}
  \mathcal{F} &= \{ (\ket{\alpha},\ket{\beta},C) \in\cC : \tr C\Phi^T = \xi_{c} \} \\
              &= \conv\left\{ (\ket{\alpha_c},\ket{\beta_c},\ketbra{\alpha_c}{\beta_c}) 
                                    : \bra{\alpha_c}\Phi\ket{\beta_c} = \xi_{c} \right\}. 
\end{split}\]
The first ingredient is the characterisation of the maximum quantum bias $\xi_Q$
by semidefinite programming (SDP) \cite{Wehner}, which by SDP duality leads to 
strong constraints on any optimal strategy via complementary slackness. 
From the assumption that $\xi_Q=\xi_c$, this leads to the second, and key, 
insight of the proof, namely that in any pair $(\ket{\alpha_c},\ket{\beta_c})$
of optimal classical strategies, Alice's and Bob's local answers uniquely 
determine each other linearly: as we show in the proof, $\ket{\beta_c}=F\ket{\alpha_c}$
for a certain matrix $F$. 
Assuming w.l.o.g. $m_{A}\leq m_{B}$, we thus have
\begin{equation}
  \mathcal{F} = \conv\left\{(\ket{\alpha_{c}}\!,F\ket{\alpha_c}\!,\ket{\alpha_c}\!\!\bra{\alpha_{c}}F^T)
                                                                  : \ket{\alpha_c} \text{ opt.} \right\}, \nonumber
\end{equation}
and its dimension can be upper bounded by that of
\[
  \aff\left\{(\ket{\alpha_{c}}\!,F\ket{\alpha_c}\!,\ket{\alpha_c}\!\!\bra{\alpha_{c}}F^T)
                                              : \ket{\alpha_c} \in \{\pm 1\}^{m_A} \right\}, 
\]
which is $m_A+\frac12 m_A(m_A-1)<D-1$. In the case of the correlation polytope,
the dimension is similarly upper bounded by $\frac12 m_A(m_A-1)<m_{A} m_{B}$.

\begin{theorem}
  \label{Theorem2}
  All nontrivial tight correlation Bell inequalities for bipartite 
  systems with binary outcomes have a quantum violation. 
\end{theorem}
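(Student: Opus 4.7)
The plan is to deduce Theorem \ref{Theorem2} as a direct corollary of (the correlation-polytope clause of) Theorem \ref{Our Theorem 1}, via a contrapositive argument together with a reduction to exhaustive games. Every correlation Bell inequality, as noted in the discussion above, can be rescaled to the form $\tr C\Phi^T \leq \xi_c$ with $\Phi$ interpretable as the game matrix of an XOR game of classical bias $\xi_c$. Nontriviality amounts to $\xi_c<1$ (equivalently $\omega_c<1$): the only correlation facets with $\xi_c=1$ are the probability-positivity constraints $\pm c_{xy}\leq 1$, which are saturated by both classical and quantum strategies, so they admit no quantum violation and must be explicitly excluded from the statement.

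First I would reduce to the exhaustive case, i.e.\ to a game matrix $\Phi$ with no all-zero rows or columns. If $\Phi$ has an all-zero row $x_0$ (columns analogously), the correlators $c_{x_0,y}$ do not appear in $\tr C\Phi^T$, and neither $\xi_c$ nor $\xi_Q$ changes upon deleting row $x_0$ to obtain a reduced game on $[m_A-1]\times[m_B]$. A dimension count on the face $\{C\in\cC_0 : \tr C\Phi^T=\xi_c\}$ shows that it can only be a facet of $\cC_0$ if its image under the coordinate projection onto the reduced correlation polytope $\cC_0'$ is itself a facet there. Iterating, I reduce to an exhaustive game while preserving both the bias values and the facet-defining property of the inequality.

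Once exhaustiveness is reached, the proof closes by applying the contrapositive of Theorem \ref{Our Theorem 1}: since the inequality defines a facet of $\cC_0$, the corresponding exhaustive XOR game cannot have $\xi_Q=\xi_c$, hence $\xi_Q>\xi_c$. This means there is a quantum correlation matrix $C\in\cQ_0$ with $\tr C\Phi^T>\xi_c$, i.e.\ a quantum violation of the original inequality, as claimed.

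The main obstacle is the reduction step: one must verify carefully that lifting a face by padding $\Phi$ with all-zero rows or columns preserves codimension, or equivalently that the coordinate projection $\cC_0\to\cC_0'$ has fibers of the expected dimension over the interior of the reduced face. I expect this to be a routine but delicate combinatorial argument using the description of $\cC_0$ as the convex hull of rank-one sign matrices $\ket{\alpha_c}\!\bra{\beta_c}$: for any point in the reduced polytope, the set of extensions to the deleted row must carry exactly the degrees of freedom that the ambient polytope contributes in that direction. Once this structural fact is established, the induction closes and the theorem follows.
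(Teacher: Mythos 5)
Your proposal and the paper share the overall shape of the argument: interpret the correlation inequality as an XOR game, reduce to the exhaustive case, and apply the contrapositive of Theorem~\ref{Our Theorem 1}. Where you diverge is in how the non-exhaustive case is handled. The paper stays in the full $M_A\times M_B$ ambient space and re-runs the dimension count of Theorem~\ref{Our Theorem 1} there, using the structural characterisation from its proof (Bob's optimal strategy is $\ket{\beta_c}=F\ket{\alpha_c}$ on the $m_A$ active inputs, and completely free on the remaining $M_A-m_A$) to bound the affine span of the face directly; this yields an explicit codimension bound $\Delta_0 \geq M_A(m_B-m_A)+\frac12 m_A(m_A+1)$. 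You instead project onto the reduced correlation polytope $\cC_0'$ and invoke a lemma (facet of $\cC_0$ implies the projected face is a facet of $\cC_0'$) which you flag as your unclosed ``main obstacle.''

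That lemma is true, but you overestimate its difficulty; it does not require controlling fiber dimensions from below, only from above, and that part is automatic. Writing $\pi$ for the coordinate projection that drops the all-zero rows and columns of $\Phi$: since $\tr C\Phi^T$ depends only on $\pi(C)$, one has $\cF=\pi^{-1}(\cF')\cap\cC_0$, and by rank--nullity applied to $\pi$ restricted to the direction space of $\aff\cF$, $\dim\cF \leq \dim\cF' + \dim\ker\pi = \dim\cF' + (M_AM_B - m_Am_B)$. So if $\cF$ were a facet, then $\dim\cF'\geq m_Am_B-1$, forcing $\cF'$ to be a facet of $\cC_0'$ (it is a proper face, as the inequality supports $\cC_0'$ nontrivially), and Theorem~\ref{Our Theorem 1} gives a contradiction unless $m_A=m_B=1$. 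This is linear algebra, not a ``delicate combinatorial argument'' about rank-one sign matrices; but as written, you defer it rather than supply it, so the proposal is incomplete at exactly the step you identify.

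Two further remarks. First, you identify ``nontrivial'' with $\xi_c<1$, which is not equivalent to the paper's exclusion criterion $m_A=m_B=1$ after reduction (i.e.\ the inequalities $|c_{xy}|\leq 1$): for example $\Phi\propto\1$ has $\xi_c=1$ with many inputs, and is excluded by Theorem~\ref{Our Theorem 1} rather than by triviality. Your conclusion still holds, but the trivial class should be phrased as in the paper. Second, a real trade-off: the paper's direct count produces quantitative lower bounds on the codimension (which the Discussion emphasises), whereas your purely contrapositive projection route loses that information.
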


\begin{proof}
{Consider a (non-exhaustive) XOR game with $M_{A}$ (Alice) and $M_{B}$
(Bob) inputs. W.l.o.g. the first $m_A$ ($m_B$) inputs of Alice (Bob) have
non-zero probability, the rest are never asked, so we can apply 
Theorem \ref{Our Theorem 1} to the reduced exhaustive game, which relates 
the optimal strategies for the  indices $x\in[m_A]$ and $y\in[m_B]$, but 
leaves completely unconstrained the remaining ones. Hence, given a strategy by Alice 
$|\alpha_{c}\rangle\oplus|\alpha_{c}'\rangle$, 
Bob's  strategy must be 
$(F|\alpha_{c}\rangle)\oplus|\beta_{c}'\rangle$, 
where $|\alpha_{c}'\rangle \in \{\pm 1\}^{M_{A}-m_{A}}$ and 
$|\beta_{c}'\rangle \in \{\pm 1\}^{M_{B}-m_{B}}$. 

We thus arrive at a codimension $\Delta = D-\dim\mathcal{F}$ of the face of
$\Delta \geq m_{B} + M_A(m_B-m_A) + \frac{m_A}{2} (m_A+1)>1$. That is, XOR games with quantum equal to classical value 
do not define a facet of the full Bell polytope. 
Following the same argument for the correlation polytope leads to the 
codimension 
  $\Delta_0 \geq M_A(m_B-m_A) + \frac12 m_A(m_A+1)$,
and this is greater than $1$ unless $m_A=m_B=1$, corresponding precisely to the 
trivial inequalities $|c_{xy}|\leq 1$.}
\end{proof}

{In the Supplementary Material (D) we give a different proof of the 
above result for non-local computation games, showing also that the 
dimension bounds are asymptotically attained for nontrivial games.}


From the proof of Theorem \ref{Our Theorem 1} we learn that the optimal 
extremal behaviours in an exhaustive XOR game with no quantum advantage are 
fully determined by the strategy of one of the parties. We will now show 
that this feature actually extends to all optimal quantum behaviours of arbitrary
XOR games.

To understand the following theorem, we recall the definition of a face 
$\cF\subset \cQ$ of a general compact convex set $\cQ$: namely, 
that whenever $\cF\ni \vec{p} = t\vec{q}+(1-t)\vec{r}$,
$0<t<1$, then both $\vec{q},\,\vec{r} \in \cF$. An \emph{exposed face} is 
obtained as $\cF = \cQ \cap H$ with a supporting hyperplane $H$ of $\cQ$; all 
exposed faces are faces of $\cQ$, but not vice versa. However, for polytopes 
every face is an exposed face \cite{Rockafeller,ConvexPolytopes}. 
Also, facets, and more generally maximal faces, are always exposed.

Note that this result explains the previous two theorems on the classical 
behaviours as being due to broader properties of the quantum sets. 

\begin{theorem}
  \label{Our Theorem3} 
  Nontrivial XOR games, or equivalently nontrivial correlation 
  Bell inequalities, never define a facet of the quantum sets of behaviours 
  $\cQ_{\text{com}}$ and $\cQ_\otimes$.
  As a consequence, the set $\cQ_0$ of quantum correlations has no
  nontrivial facets.
\end{theorem}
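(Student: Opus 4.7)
The plan is to lift the SDP-duality argument of Theorem~\ref{Our Theorem 1} from classical extremal strategies to quantum optimal strategies. By Tsirelson's theorem, the quantum bias of the XOR game with game matrix $\Phi$ equals the semidefinite program
\[
  \xi_Q = \max\bigl\{\tr(\tilde\Phi G) : G\succeq 0,\ G_{ii}=1\bigr\},
  \qquad \tilde\Phi = \tfrac12\begin{pmatrix}0 & \Phi \\ \Phi^T & 0\end{pmatrix},
\]
whose feasible $G$ is the Gram matrix of unit vectors $u_x,v_y$ realizing $c_{xy}=\langle u_x|v_y\rangle$. The dual reads $\xi_Q = \min\{\tr Y : Y\text{ diagonal},\ Y\succeq\tilde\Phi\}$, and $G=I$ verifies Slater's condition, ensuring strong duality.

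I would first reduce to an exhaustive game exactly as in the proof of Theorem~\ref{Theorem2}, and then invoke complementary slackness: any primal/dual optima $(G^*,Y^*)$ with $Y^* = \diag(\lambda^*)\oplus\diag(\mu^*)$ satisfy $(Y^*-\tilde\Phi)G^*=0$. Translating this back to the Gram vectors, one obtains the linear relations
\[
  2\lambda_x^* u_x = \sum_y \Phi_{xy}\, v_y, \qquad 2\mu_y^* v_y = \sum_x \Phi_{xy}\, u_x,
\]
which are the quantum counterparts of the classical identity $\ket{\beta_c}=F\ket{\alpha_c}$ used in Theorem~\ref{Our Theorem 1}. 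Exhaustiveness forces $\lambda_x^*,\mu_y^*>0$, because a zero diagonal entry of the PSD matrix $Y^*-\tilde\Phi$ would force its entire row and column to vanish, contradicting a non-zero row/column of $\Phi$. Solving the second relation for $v_y$ and substituting into $c_{xy}=\langle u_x|v_y\rangle$ then gives
\[
  c_{xy} = \frac{1}{2\mu_y^*}\sum_{x'}\Phi_{x'y}\,\langle u_x|u_{x'}\rangle,
\]
so every optimal correlator matrix $C$ is a fixed affine image of the Gram matrix $G_U = (\langle u_x|u_{x'}\rangle)_{x,x'}$ of Alice's vectors alone.

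To finish, I would bound dimensions. The matrix $G_U$ is a real symmetric positive-semidefinite $m_A\times m_A$ matrix with unit diagonal, hence lies in an affine space of dimension $\binom{m_A}{2}$, so the face of $\cQ_0$ cut out by $\tr C\Phi^T=\xi_Q$ has dimension at most $\binom{m_A}{2}$. For the corresponding face of $\cQ_{\text{com}}$ or $\cQ_\otimes$, the defining hyperplane does not involve the marginals $\alpha_x,\beta_y$, which contribute at most $m_A+m_B$ further directions, giving a behaviour face of dimension at most $\binom{m_A}{2}+m_A+m_B$. In either setting this is strictly less than the facet dimension whenever $\binom{m_A}{2}<m_A m_B-1$, a condition which (with $m_A\leq m_B$ w.l.o.g.) holds precisely for $(m_A,m_B)\neq(1,1)$; the excluded case corresponds exactly to the trivial inequalities $|c_{xy}|\leq 1$.

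The point needing most care is the strict positivity of the dual optimum, which the row-of-$\Phi$ argument handles as long as the game is exhaustive. A secondary subtlety is that $\cQ_\otimes$ is not known to be closed; however, by Tsirelson's theorem $\cQ_{\text{com}}$ and (the closure of) $\cQ_\otimes$ yield the same correlator set and share the same SDP description, so the correlator-level argument together with the marginal bound applies uniformly to both bodies.
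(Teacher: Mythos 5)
Your proof is correct and follows essentially the same route as the paper: Tsirelson's SDP characterization of $\xi_Q$, strong duality and complementary slackness, and the key observation that Bob's Gram vectors are linearly determined by Alice's (via $2\mu_y^* v_y=\sum_x\Phi_{xy}u_x$, using $\mu_y^*>0$ for exhaustive games), so that every optimal correlator matrix lies in a fixed linear image of the set of $m_A\times m_A$ unit-diagonal Gram matrices, whose affine dimension is $\binom{m_A}{2}$. Two minor remarks. First, the paper sharpens your marginal count: taking the inner product of $2\mu_y^* v_y=\sum_x\Phi_{xy}u_x$ with $\ket{\psi}$ gives $\beta_y=\frac{1}{2\mu_y^*}\sum_x\Phi_{xy}\alpha_x$, so Bob's marginals are also determined by Alice's and only $m_A$ further directions are added, not $m_A+m_B$; your looser bound still clears the facet threshold for $(m_A,m_B)\neq(1,1)$, so this is cosmetic. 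Second, to obtain the final assertion that $\cQ_0$ has no nontrivial facets at all, you need the (easy) bridging step that any facet of a compact convex set is a maximal face, hence exposed, hence cut out by a supporting hyperplane $\tr C\Phi^T=\mathrm{const.}$, i.e.\ is the face of an XOR game --- your argument covers all XOR games but leaves this reduction to XOR games implicit.
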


{See the Supplementary Material (C) for the complete proof. 
To give the broad outline, we start with an exhaustive XOR game. 
The complementary slackness condition in the proof of Theorem \ref{Our Theorem 1}
for the optimal quantum strategy leads to 
$|\beta_{y}\rangle=\sum_{x'} F_{yx'}|\alpha_{x'}\rangle$,
with the same matrix $F$ as before. 
In other words, once again Alice's optimal quantum strategy uniquely determines Bob's,
and vice versa.

Thus, we get for an optimal quantum correlation matrix 
$C_{xy}=\braket{\alpha_{x}}{\beta_{y}}=\sum_{x'} F_{yx'}\braket{\alpha_{x}}{\alpha_{x'}}$, 
and hence the dimension of their affine span is bounded by that of the Gram 
matrices $[\braket{\alpha_{x}}{\alpha_{x'}}]_{xx'}$, with dimension $\frac12 m_A(m_A-1)$, 
leading to the same dimension bound as in the proof of Theorem \ref{Our Theorem 1}. 
Non-exhaustive XOR games are treated as in the proof of Theorem \ref{Theorem2}.}

\medskip
\ssection{Discussion and outlook} 
We have shown that a two-party correlation Bell inequality (XOR game) with 
no quantum violation (quantum advantage) cannot define a facet of the Bell 
polytope. The contrapositive of this statement has deep physical implications: 
all tight correlation Bell inequalities exhibit a quantum violation. 
{In fact, we have proven lower bounds on the codimension of the defined face
(increasing with the number of inputs). As a consequence, when the codimension 
is lower bounded by $\Delta>1$, not only all tight correlation inequalities 
will have quantum violations, but also those corresponding to faces $\mathcal{F}$ 
with $\dim\mathcal{F}\geq D-\Delta$.}  

On the way, we have proved that this in fact is due to a broader property of 
the convex set of quantum correlations, namely that it does not have any non-trivial
facets, only lower-dimensional faces. It remains to be seen what the physical meaning
of this curious geometric observation is.

Our results appear very much tied to the world of two-player, binary outcome
XOR games. This leaves open the questions whether XOR games for more than two players 
can define common facets of the quantum and classical sets (note that GYNI
defines such a facet, but it is not an XOR game), and whether for two players 
there are any tight Bell inequalities at all without quantum violations. It might be
possible to extend our results at least two two-players MOD-$q$ games where
each player has a $q$-ary outcome.

\ssection{Acknowledgements}
We thank Mafalda Almeida, Nicolas Brunner and Paul Skrzypczyk for discussions on 
non-local computation, and Dardo Goyeneche for rekindling our interest in Bell 
inequalities without quantum violation. 
Support from the Spanish MINECO, project FIS2016-80681-P with the support of 
AEI/FEDER funds, and from the Generalitat de Catalunya, project CIRIT 2017-SGR-1127,
is acknowledged.

\bibliographystyle{unsrt}
\bibliography{references}

\vspace{1cm}

\section{Supplementary Material}

\subsection{A. Quantum bias}
\aw{If Alice and Bob use quantum resources, i.e. a (possibly) entangled state $|\psi\rangle\in\mathcal{H}$, 
with with observables $\hat{a}_x$ and $\hat{b}_y$ depending on their respective 
inputs $x$ and $y$, to produce their measurement outcomes $a$ and $b$, then,
using the expression given in the main text for the quantum correlations $c_{xy}$, the quantum gain 
can be written as 
\begin{equation}\begin{split}
  \xi_{Q} &= \sum_{x,y} q(x,y) \bra{\psi}(-1)^{\hat{a}_x+\hat{b}_y+f(x,y)}\ket{\psi} \\
          &= \sum_{x,y} q(x,y)(-1)^{f(x,y)} \bra{\alpha_x}\beta_y\rangle, 
\end{split}\end{equation}
with $\ket{\alpha_x} = (-1)^{\hat{a}_x}\ket{\psi}$ and
$\ket{\beta_y} = (-1)^{\hat{b}_y}\ket{\psi}$ unit vectors in $\cH$.
Vice versa, given any set of unit vectors in any complex Hilbert space,
there exists an equivalent set, i.e. with the same pairwise inner 
products $\bra{\alpha_x}\beta_y\rangle$, of the above form, on a tensor
product Hilbert space \cite{cirelson_quantum_1980}.}
It will \aw{prove} convenient to define the following states on an extended vector space,
\begin{equation}\begin{split}
  \label{Def quantum alpha beta and Phi}
  |\alpha\rangle_{Q} &= \sum_x (-1)^{\hat{a}_x}\ket{\psi} \otimes \ket{x},\\
  |\beta\rangle_{Q}  &= \sum_y (-1)^{\hat{b}_y}\ket{\psi} \otimes \ket{y}. 
\end{split}\end{equation}
With this, the expression for the optimal quantum gain reads
\begin{equation}
  \label{omegaQ}
  \xi_{Q} 
    = \max_{\ket{\alpha}_{Q},\ket{\beta}_{Q}} {_{Q}\langle\alpha|}\mathbb{I}\otimes\Phi|\beta\rangle_{Q},
\end{equation}
which resembles the expression \eqref{eq:cl-gain} for its classical counterpart.

\subsection{B. XOR games can always be won using no-signalling behaviours}
It is easy to understand the previous observation of $\omega_{NS}=1$ for 
every XOR game, by realising that an XOR game can always be won with certainty 
using a behaviour that outputs \emph{uniformly random} local bits $a$ and $b$,
and so is evidently no-signalling:
\begin{equation}
  p(a,b|x,y) = \begin{cases}
                 \frac12 & \text{ if } a\oplus b = f(x,y), \\
                 0       & \text{ otherwise}.
               \end{cases}
\end{equation}
This behaviour has $\ket{\alpha}=\ket{\beta}=0$, but its correlator 
term is remarkably $c_{xy} = (-1)^{f(x,y)}$, i.e. an arbitrary $\pm 1$-matrix.
In geometric terms, this says that $\mathcal{NS}_0$ is the $\ell_\infty$ unit 
ball (aka hypercube), in other words $\mathcal{NS}_0$ is entirely characterised 
by the inequalities $-1\leq c_{xy} \leq 1$. 
These are indeed the trivial Bell inequalities, since they follow from the 
non-negativity of the probabilities $p(a,b|x,y)$. 

To back up the definite article in the previous statement, it is in fact
known that they not only define facets of $\mathcal{NS}_0$, but also of 
$\cC_0$ and thus of $\cQ_0$; namely, for every pair $(x_0,y_0)$, the 
behaviour $C_0 = \ketbra{x_0}{y_0}$ is in $\cC_0$, and so is an entire 
sufficiently small neighbourhood of $C_0$ of correlators $C$ in the supporting 
hyperplane $H = \{C : C_{x_0y_0} = 1\}$ (and analogously for $-C_0$). 
Again, this is easy to see: consider local classical strategies 
$\ket{\alpha_c}=\sum_x \alpha_x\ket{x}$ and $\ket{\beta_c}=\sum_y \beta_y\ket{y}$
with $\alpha_{x_0}=\beta_{y_0}=1$, so that we can write
$\ket{\alpha_c}=\ket{x_0} + \ket{\alpha'}$ and $\ket{\beta_c}=\ket{y_0} + \ket{\beta'}$.
Then, 
\begin{equation}
  \ketbra{\alpha_c}{\beta_c} 
      = C_0 + \ketbra{x_0}{\beta'} + \ketbra{\alpha'}{y_0} + \ketbra{\alpha'}{\beta'}.
\end{equation}
By taking convex combinations over 
$\ket{\beta'} = \epsilon\ket{y_1} + \sum_{y\neq y_0,y_1} \beta_y\ket{y}$,
with uniformly random $\beta_y=\pm 1$ ($y\neq y_0,y_1$) and 
uniformly random $\ket{\alpha'}$, we annihilate all terms except one,
showing that $C_0+\epsilon\ketbra{x_0}{y_1} \in \cC_0$, for $\epsilon=\pm 1$. 
Similarly, we find $C_0+\epsilon\ketbra{x_1}{y_0} \in \cC_0$. 
Finally, by taking convex combinations over 
$\ket{\alpha'}=\sum_{x\neq x_0} \alpha_x\ket{x}$ and 
$\ket{\beta'}=\epsilon\alpha_{x_1}\ket{y_1} + \sum_{y\neq y_0,y_1} \beta_y\ket{y}$,
with uniformly random $\ket{\alpha'}$ and uniformly random 
$\beta_y=\pm 1$ ($y\neq y_0,y_1$), we get that 
$C_0+\epsilon\ketbra{x_1}{y_1} \in \cC_0$. The convex hull of all these
points, a hyper-octahedron, contains a small neighbourhood of
$C_{0}$ in the hyperplane $H$.

\subsection{C. Proofs}
Here, we give the complete proofs of Theorems \ref{Our Theorem 1}
and \ref{Our Theorem3} in the main text.

\begin{proofof}[of Theorem \ref{Our Theorem 1}]
Our aim is to bound the dimension of the face $\mathcal{F}$ defined by the 
maximum classical bias $\xi$ of a given XOR game:
\begin{equation}\begin{split}
  \mathcal{F} &= \{ (\ket{\alpha}\!,\ket{\beta}\!,C) \in\cC : \tr C\Phi^T = \xi_{c} \} \\
              &= \conv\left\{ (\ket{\alpha_c}\!,\ket{\beta_c}\!,\ketbra{\alpha_c}{\beta_c}) 
                                          : \bra{\alpha_c}\Phi\ket{\beta_c} = \xi_{c} \right\}. 
\end{split}\end{equation}
In particular, we want to show that its dimension $\dim\mathcal{F}$ is strictly 
lower than the dimension of a facet, $D-1$, or equivalently that the codimension of the 
face in the classical polytope is $\Delta=D-\dim\mathcal{F}>1$.


We start by considering the consequences of assuming no quantum advantage, 
i.e. $\xi_{c}=\xi_{Q}$. The optimal quantum gain can be written as 
$\xi_{Q}=\max_{C}\tr{C\Phi^T}$, where 
$C=\sum_{x,y}\langle\alpha_x|\beta_y\rangle\ketbra{x}{y}$ is the quantum correlator
matrix, which is fully characterised \cite{cirelson_quantum_1980} by the inner products 
of an arbitrary set of unit vectors in $\R^{m_{A}+m_{B}}$. In terms of the 
characterisation in \eqref{Def quantum alpha beta and Phi}, 
$\ket{\alpha_{x}}=\braket{x}{\alpha}_{Q}=(-1)^{\hat{a}_x}|\psi\rangle$, 
and analogously for Bob's strategy. In order to write this optimisation problem as a 
semidefinite program \cite{cleve_Procconsequences_2004,Wehner} we define the following 
Gram matrix  
\begin{equation}
  \Tilde{Q}=\left(\begin{array}{c|c}
                    R   & C\\
                    \hline
                    C^T & S  \\
            \end{array}\right),
\end{equation}
with $R_{xx'}=\langle\alpha_x|\alpha_{x'}\rangle$ and 
$S_{yy'}=\langle\beta_{y}|\beta_{y'}\rangle$.  
An equivalent characterisation of a Gram matrix of unit vectors is 
$\Tilde Q \succeq 0$ and $\tilde Q_{ii}=1$ for all $i\in[m_A+m_B]$, 
and consequently we can write the maximum quantum gain as:
\begin{equation}\begin{split}
  \label{eq:xi-primal}
  \xi_{Q} &= \max_{\tilde Q} \tr{\Tilde{Q}\Tilde{\Phi}},\\
          &  \text{ s.t. } \diag(\Tilde{Q})=(1,..,1),\, \Tilde{Q}\succeq 0, 
\end{split}\end{equation}
where 
$\Tilde{\Phi}=\frac{1}{2}\left(\begin{array}{c|c}
                                 0 & \Phi\\
                                 \hline
                                 \Phi^T & 0  \\
                          \end{array}\right)$. 
	
Consider now the Lagrangian, 

\begin{equation}
\begin{split}
  L &= \tr{\Tilde{Q}\Tilde{\Phi}}-\sum_i t_i\left(\tr{|i\rangle\!\langle i|\Tilde{Q}}-1\right) \\
    &=  \tr\left[\Tilde{Q}(\Tilde{\Phi}-\sum_i t_i|i\rangle\!\langle i|)\right]+\sum_i t_i,
\end{split}
\end{equation}
where $t_i\in\R$ are the Lagrange multipliers. Therefore,
\begin{equation}
  \max_{\Tilde{Q}\succeq 0} L
       = \begin{cases}
           +\infty    & \text{ if } \sum_i t_i|i\rangle\!\langle i|-\Tilde{\Phi}\not\succeq 0, \\
           \sum_i t_i & \text{ if } \sum_i t_i|i\rangle\!\langle i|- \Tilde{\Phi}\succeq 0,
         \end{cases}
\end{equation}
and thus the original SDP \eqref{eq:xi-primal} can be written in its dual form,
\begin{equation}
  \label{dual problem}
  \min \sum_i t_i \text{ s.t.} \sum_i t_i |i\rangle\!\langle i| \succeq \Tilde{\Phi}. 
\end{equation}
This holds because the primal and dual SDPs satisfy the condition for
strong duality.

From the above construction it follows that $\tr{\Tilde Q\Tilde{\Phi}} \leq \xi_{Q} \leq \sum_i t_i$
for any pair of primal and dual feasible solutions. Furthermore, by strong duality,
the maximum primal value $\xi_Q$ equals the solution of the dual (\ref{dual problem}).
That is, the optimal value is attained if and only if $\tr{\Tilde Q\Tilde{\Phi}}= \sum_i t_i$, 
or equivalently if the spurious term in the Lagrangian vanishes:  
$\tr\left[\tilde Q\left(\sum_it_i|i\rangle\langle i|-\Tilde{\Phi}\right)\right]=0$ 
(complementary slackness). Note that by the form of the dual problem (\ref{dual problem}), 
the $t_i$ are non-negative numbers; indeed, by our assumption that $\Phi$ has 
no all-zero rows or columns, we even can conclude that all $t_i > 0$.

Now, our hypothesis of no quantum advantage implies that $\xi_Q$ coincides with its 
classical counterpart 
$\xi_{c}={\langle}\beta_{c}|\Phi|\alpha_{c}\rangle = \tr\ketbra{s}{s}\Tilde{\Phi}^T$,
where we have defined $|s\rangle=|\alpha_{c}\rangle\oplus|\beta_{c}\rangle$. 
In other words, $\xi_{Q}$ can be reached with a classical correlation matrix
$\Tilde Q=\ketbra{s}{s}$. 
Letting $\Gamma = \sum_i t_i|i\rangle\!\langle i| = \frac12 \Sigma\oplus\Lambda \succ 0$,  
where $\Sigma$ and $\Lambda$ are diagonal positive definite matrices, the slackness 
condition reads $\tr\left[(\Gamma-\Tilde{\Phi})\ketbra{s}{s}\right]=0$, which in turn 
implies that $(\Gamma-\Tilde{\Phi})\ket{s}=0$, since both $\Gamma-\Tilde{\Phi}$ 
and $\ketbra{s}{s}$ are positive semidefinite. 
Therefore, $\Sigma|\alpha_{c}\rangle=\Phi|\beta_{c}\rangle$ and 
$\Lambda|\beta_{c}\rangle=\Phi^T|\alpha_{c}\rangle$, or equivalently 
\begin{equation}
  \label{best strategies}
  |\alpha_{c}\rangle=\Sigma^{-1}\Phi|\beta_{c}\rangle, \, 
  |\beta_{c}\rangle=\Lambda^{-1}\Phi^T|\alpha_{c}\rangle =: F |\alpha_{c}\rangle.
\end{equation}
This means that once Alice's optimal strategy is fixed, Bob's best strategy is 
\aw{uniquely} determined by her choice and vice versa. Assuming w.l.o.g. $m_{A}\leq m_{B}$, 
the dimension of the face $\mathcal{F}$ generated by such strategies is then
\begin{align}
  \label{eq:span1}
  \dim\mathcal{F} 
    &=    \dim \aff\left\{(\ket{\alpha_{c}}\!,F\ket{\alpha_c}\!,\ket{\alpha_c}\!\!\bra{\alpha_{c}}F^T)
                          : \ket{\alpha_c} \text{ opt.} \right\} \nonumber \\
    &\leq \dim \aff\left\{(\ket{\alpha_{c}}\!,\ket{\alpha_c}\!,\ket{\alpha_c}\!\!\bra{\alpha_{c}})
                          : \ket{\alpha_c}\in\{\pm 1\}^{m_A} \right\} \nonumber \\
    &=    m_A + \frac{m_A(m_A-1)}{2},
\end{align}
where $\aff$ denotes the affine span,  and where the second term in the dimension
comes from the fact that the affine span of the matrices $\ket{\alpha_{c}}\!\bra{\alpha_{c}}$
consists precisely of the real symmetric matrices with $1$'s along the diagonal.
This leads to a codimension of the face in the Bell polytope of 
\mbox{$\Delta \geq m_B + m_Am_B - \frac12 m_A(m_A-1) > 1$.} 

Similarly, we can bound the codimension of 
this facet in the correlation polytope, which is solely generated 
by the correlators $\ket{\alpha_{c}}\!\bra{\alpha_{c}}F^T$, leading to 
$\Delta_0 \geq m_Am_B - \frac12 m_A(m_A-1) > 1$, unless $m_A=m_B=1$,
which is a trivial case.

Therefore, we see that the defined face is not a facet in either setting.
\end{proofof}

\begin{proofof}[of Theorem \ref{Our Theorem3}] 
Starting from the definition $|q\rangle=|\alpha\rangle_{Q}\oplus|\beta\rangle_{Q}$,
we can write the quantum bias as 
$\xi_{Q} = {_{Q}\!{\langle}}\beta|\Phi|\alpha\rangle_{Q}
         = \langle q|\1\otimes\Tilde{\Phi}|q\rangle$.
It is straightforward to check, following the steps in the proof of 
Theorem \ref{Our Theorem 1}, that the complementary slackness condition translates 
into $(\1\otimes\Gamma-\1\otimes\Tilde{\Phi})\ket{q}=0$, which in the case of 
exhaustive games leads to 
\begin{equation}
  \label{best strategies q}
  |\beta\rangle_{Q} = (\1 \otimes F) |\alpha\rangle_{Q},
\end{equation}
or equivalently $|\beta_{y}\rangle=\sum_{x'} F_{yx'}|\alpha_{x'}\rangle$, 
with the same matrix $F$ as in the proof of Theorem \ref{Our Theorem 1}.
This shows that Bob's optimal quantum strategy (as encoded in the vectors $\{\ket{\beta_y}\}$)
is fully determined by Alice optimal strategy (the vectors $\{\ket{\alpha_x}\}$). 

In order to bound the dimensionality of the behaviours that determine the corresponding 
face we note that 
$c_{xy} = \braket{\alpha_{x}}{\beta_{y}}=\sum_{x'} F_{yx'}\braket{\alpha_{x}}{\alpha_{x'}}$.
The dimension of the \aw{affine} span of the optimal behaviours $C$ is clearly bounded 
by that of the Gram matrices $[\braket{\alpha_{x}}{\alpha_{x'}}]_{xx'}$, which we
recall to be real symmetric matrices with diagonal elements
$\braket{\alpha_{x}}{\alpha_{x}}=1$, thus the dimension is bounded 
by $\frac12 m_{A}(m_{A}-1)$ and therefore the codimension of the face 
of $\cQ_0$ is bounded by the classical bound derived in Theorem \ref{Our Theorem 1}. 
To get the same for $\cQ_{\text{com}}$ and $\cQ_\otimes$, we need to 
control also the marginal parts of the behaviours, which can be written
as $\alpha_x = \braket{\psi}{\alpha_x}$ and 
$\beta_y = \braket{\psi}{\beta_y} = \sum_{x'} F_{yx'} \alpha_{x'}$.
In other words, $\ket{\beta} = F\ket{\alpha}$ is fully determined by 
$\ket{\alpha}$, just as we had seen for the deterministic classical strategies,
hence only $m_A$ is added to the dimension of the face.

For non-exhaustive XOR games we then proceed as in Theorem \ref{Theorem2}.
In particular, XOR games do not define facets of either 
$\cQ_{\text{com}}$ or $\cQ_\otimes$. 
In the quantum correlation set $\cQ_0$, the XOR games that can define facets
have only one input each for Alice and Bob, and that leaves only $|c_{xy}|\leq 1$,
which are indeed facet defining inequalities, but they are trivial, as they 
correspond to the non-negativity of probability.

To arrive at the conclusion for $\cQ_0$ having no facets at all, note that 
any purported facet is exposed, so it has to be defined by an XOR game,
and for those we just showed that only the trivial inequalities define facets.
\end{proofof}

\subsection{D. Dimension bound for non-local computation and (asymptotic) attainability}
In \cite{linden_quantum_2007}, Linden \emph{et al.} introduced the cooperative games of non-local computation (NLC) and showed that quantum strategies provide no advantage over classical ones, although stronger forms of non-signaling correlations allowed perfect success. In the problem of non-local computation Alice and Bob need to collaborate in order to compute to  a boolean function $f(z)$ of a string of $n$ bits,  $z=z_1z_2\ldots z_n$,  without communicating with each other during the computation, and without individually learning anything about the input string $z$. The inputs are promised to be given with an arbitrary probability distribution $\tilde{q}(z)$ known to Alice and Bob and are split in two correlated signals $x=x_1x_2\ldots x_n$ and $y=y_1y_2\ldots y_n$ so that $z=x\oplus y$ (meaning $x_i\oplus y_i=z_i$ for all $i=1,\ldots,n$)  which are given to Alice and Bob respectively.  In order to enforce that  Alice and Bob do not learn anything about $z$, it is necessary that  $p(x_i=0)=p(x_i=1)=\frac12$ for all $x_i$ and idem for $y_i$.   It is hence clear that NLC is a particular instance of an XOR game with $q(x,y) = \frac{1}{2^n}\tilde{q}(x\oplus y)$ and $a\oplus b= f(z=x\oplus y)$.The corresponding game matrix is given by $\Phi_{NLC}=\sum_{x,y}(-1)^{f(x\oplus y)}\tilde{q}(x\oplus y)|x\rangle\!\langle y|$.

In \cite{linden_quantum_2007}, 
it is proved that the game matrix is diagonal in the Fourier (Hadamard) basis, 
$|\tilde{u}\rangle=\sum_{x}(-1)^{u \cdot x}|x\rangle$, where $u\cdot x$ is the 
inner product modulo $2$ of the bit strings $u$ and $x$. As a consequence, 
$\xi_Q \leq  2^{n-1}\|\Phi_{NLC}\| := \xi^*$, where $\|\cdot\|$ 
denotes the operator norm. 
Moreover, it was shown that the upper bound is attained by a classical local 
strategy with $\xi_{c}=\xi^*$, i.e. by a suitably chosen pair of vectors 
$\ket{\alpha_{c}}$, $\ket{\beta_{c}}$,
and one concludes that NLC games present no quantum advantage. 
Since the inequality
\begin{equation}
  \label{w leq w*}
  \xi \leq \xi^*
\end{equation} 
holds for any classical local and quantum average success probability, this Bell 
inequality is neither violated by classical physics nor quantum mechanics and 
thus we have that the Bell polytope and the quantum set share a region of their 
boundary.  

The result stated below as Corollary \ref{Theorem NLC no facet} 
(because it follows from Theorem \ref{Our Theorem 1}) was first proved
in \cite{ramanathan_tightness_2017}. It says that an NLC 
Bell inequality is never tight for any number of inputs. 
Here we present an alternative proof, based on directly bounding the dimension 
of the face defined by the Bell inequality. 
This is a reconstruction of the earlier unpublished proof referenced 
in \cite{winter_quantum_2010}.

\begin{corollary}
  \label{Theorem NLC no facet}
  For \aw{any number $n$ of input bits}, an NLC game never defines a facet of 
  the Bell polytope, nor of the correlation polytope.  
\end{corollary}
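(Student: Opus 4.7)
The plan is to prove the corollary directly, exploiting the explicit spectral structure of $\Phi_{NLC}$ in the Hadamard basis, without invoking the SDP duality of Theorem \ref{Our Theorem 1}. The upshot is not only a proof that the NLC face is never a facet, but also an explicit bound on its dimension.

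First I would characterize the optimal classical deterministic strategies. Since $\Phi_{NLC}$ is diagonal in the Hadamard basis with eigenvalues $\lambda_u$, expanding $\ket{\alpha_c}=\sum_u (a_u/\sqrt{2^n})\ket{\tilde u}$ and $\ket{\beta_c}=\sum_u (b_u/\sqrt{2^n})\ket{\tilde u}$ gives $\sum_u a_u^2 = \sum_u b_u^2 = 2^n$ by Parseval, and the classical bias equals $\sum_u \lambda_u a_u b_u$ up to a normalization constant. Two successive Cauchy--Schwarz bounds give $|\sum_u \lambda_u a_u b_u|\le M\cdot 2^n$ with $M=\max_u|\lambda_u|$, and equality, which defines the face, forces the Fourier supports of both $a$ and $b$ to lie in $U^\ast=\{u:|\lambda_u|=M\}$ together with the sign-matching $b_u=\mathrm{sign}(\lambda_u)\,a_u$ on $U^\ast$. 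In particular Bob's optimal choice is a fixed linear image $\ket{\beta_c}=E\ket{\alpha_c}$ of Alice's, with $E=\sum_{u\in U^\ast}\mathrm{sign}(\lambda_u)\ket{\tilde u}\bra{\tilde u}/2^n$, rederiving in this setting the rigidity that Theorem \ref{Our Theorem 1} obtained via complementary slackness.

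Second, I would bound $\dim\cF$ directly. Every vertex of $\cF$ has correlator $C=\ket{\alpha_c}\bra{\alpha_c}E$ with $\ket{\alpha_c}$ restricted to the $|U^\ast|$-dimensional Hadamard subspace $V^\ast=\mathrm{span}\{\ket{\tilde u}:u\in U^\ast\}$; since $E$ is fixed, the affine span of the $C$'s is bounded by that of the rank-one symmetric matrices $\ket{\alpha_c}\bra{\alpha_c}$ on $V^\ast$ with fixed trace $2^n$, which has dimension at most $\tfrac12|U^\ast|(|U^\ast|+1)-1$. The marginals $\ket{\alpha_c}$ and $\ket{\beta_c}=E\ket{\alpha_c}$ add at most $|U^\ast|$ further dimensions, yielding $\dim\cF\le \tfrac12|U^\ast|(|U^\ast|+1)+|U^\ast|-1$. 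Using the trivial $|U^\ast|\le 2^n$, this is strictly less than $D-1=4^n+2\cdot 2^n-1$ for every $n\ge 1$, and the correlator-only count $\tfrac12|U^\ast|(|U^\ast|+1)-1<4^n$ handles the correlation polytope analogously.

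The main obstacle is the regime where $|U^\ast|$ is close to $2^n$ (extreme spectral degeneracy of $\Phi_{NLC}$), since that is where the symmetric-matrix dimension count is tightest. Fortunately even in the worst case the quadratic growth $\sim |U^\ast|^2/2$ is still only half of the ambient $4^n=|U^\ast|^2$, so no facet can arise. To complement the bound with the asymptotic attainability claim accompanying the corollary, it suffices to exhibit NLC games with $|U^\ast|=1$ (obtained from generic $\tilde q$ and $f$): the face then consists of a single pair of antipodal vertices and its codimension is essentially $D$.
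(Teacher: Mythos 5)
Your proposal is correct and follows essentially the same spectral route as the paper's proof: both diagonalize $\Phi_{NLC}$ in the Hadamard basis, deduce that optimal classical strategies are confined to the maximum-modulus eigenspace(s) and that Bob's answer is a fixed linear image of Alice's, and then bound the dimension of the resulting affine span. The point of departure is only in how the rigidity is obtained: you extract it from the chain of Cauchy--Schwarz equalities in the Fourier coefficients, whereas the paper observes directly that optimal $\ket{\alpha_c}$ and $\ket{\beta_c}$ must lie in $\text{Eig}(\pm\lambda)$ with matching signs, giving $F = 2\Pi_\lambda - \1$. Your $E$ is exactly the restriction of $F$ to $V^*$. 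Your dimension count uses the fixed-trace constraint (one affine condition) rather than the paper's stronger unit-diagonal constraint (which cuts out $2^n$ conditions in the standard basis); as a result, when $|U^*| = 2^n$ your bound $\tfrac12|U^*|(|U^*|+1)+|U^*|-1$ is weaker by $2^n - 1$ than the paper's $2^n + 2^{n-1}(2^n-1)$, though it still comfortably rules out a facet. On the other hand, parameterizing by $|U^*|$ makes explicit the improvement for non-degenerate spectra; your bound matches the worst case of the paper's refined $k,\ell$-bound in Supplementary D at $k=|U^*|, \ell=0$.

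Two points worth tightening. First, for the Cauchy--Schwarz equality conditions to characterize the face you must verify that the maximum classical bias actually saturates the bound $M\cdot 2^n$; this is true precisely because the Hadamard vectors $\ket{\tilde u}$ are themselves $\pm 1$-valued, so $\ket{\alpha_c}=\ket{\beta_c}=\ket{\tilde u_0}$ (with a sign flip if $\lambda_{u_0}<0$) is a legitimate deterministic strategy achieving it---state this explicitly, since otherwise the equality analysis could be vacuous. Second, your closing claim about attainability via $|U^*|=1$ points in the wrong direction: that choice produces a face of dimension about $1$, i.e.\ it shows the bound is \emph{small}, not that the upper bound is tight. The interesting attainability question, addressed in the paper, concerns the opposite regime $|U^*|$ close to $2^n$ (e.g.\ $\Phi\propto\1$, or the asymptotically nontrivial game with $k=2^n-1,\ \ell=1$), which is where the codimension bound could fail to exclude a facet if it were not sharp enough.
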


\begin{proof}
\aw{As $m_A=m_B=2^n$,} the dimension of the Bell polytope is $D=4^n+2^{n+1}$. 
In order to see that the NLC Bell inequality does not define a facet, we 
show that the dimension of the affine space generated by the optimal 
classical strategies is strictly smaller than $D-1$. 
The local classical success probability is bounded by $\omega^*=\frac12(1+\xi^{*})$, which 
only depends on $\lambda = \|\Phi_{NLC}\|$.
Let $|u_0\rangle$ be a corresponding eigenvector of $\pm\lambda$.

Now, $\xi$ achieves the maximum value when $|\alpha_c\rangle$ and $|\beta_c\rangle$ 
are both proportional to $|u_0\rangle$, i.e. $|\alpha_c\rangle=\pm|\beta_c\rangle=\pm|u_0\rangle$.
Thus, we consider the eigenspaces of the two eigenvalues $\pm\lambda$, which 
we denote by $\text{Eig}(\lambda)$ and $\text{Eig}(-\lambda)$, respectively. 
There are two cases: either $|\alpha_c\rangle = |\beta_c\rangle\in \text{Eig}(\lambda)$ 
or $|\alpha_c\rangle = -|\beta_c\rangle\in \text{Eig}(-\lambda)$. These we can 
write in a single equation as $\ket{\beta_c} = F\ket{\alpha_c}$, with 
$F = 2\Pi_{\lambda}-\1$ and $\Pi_{\lambda}$ the projector onto the
eigenspace $\text{Eig}(\lambda)$.

Therefore, the face $\mathcal{F}_{NLC}$ that these optimal strategies 
define, is contained in the following affine subspace:
\begin{equation}\begin{split}
  \aff &\left\{(\ket{\alpha_c}\!,F\ket{\alpha_c}\!,\ketbra{\alpha_c}{\alpha_c}F) 
                                               : \ket{\alpha_c} \in \{\pm 1\}^{2^n} \right\},
\end{split}\end{equation}
{whose dimension we have already bounded before, in the proof 
of Theorem \ref{Our Theorem 1}, Eq. \eqref{eq:span1}, and so
\begin{equation}
  \label{NLC smaller dimension}
  \dim \mathcal{F}_{NLC} \leq 2^n + {2^{n-1}(2^n-1)},
\end{equation}
and codimension $\Delta \geq 2^n + {2^{n-1}(2^n+1)} > 1$. 

For the correlation polytope, we get similarly that the dimension of the
face is upper bounded by ${2^{n-1}(2^n-1)}$, resulting in a 
bound of $\Delta_C \geq {2^{n-1}(2^n+1)} > 1$ for the codimension
of the face.}

{The bound Eq. \eqref{NLC smaller dimension} is achievable with
equality for $\Phi = 2^{-n}\1$; this describes a game where $x$ 
is uniformly distributed, and $y=x$, and to win, Alice and Bob have to 
output the same bit $a=b$. This is evidently possible with probability 
$1$, using any local strategy $\ket{\alpha_c}$ for Alice and $\ket{\beta_c}=\ket{\alpha_c}$
for Bob, i.e. $\alpha_x=\beta_x$ for all $x\in\{0,1\}^n$, so that $F=\1$ above.
Thus, the resulting face attains Eq. \eqref{NLC smaller dimension} with
equality; likewise, the corresponding face in the correlation polytope has 
dimension $\dim\cF_C = {2^{n-1}(2^n-1)}$. 
Of course, one might object that this game has winning probability $\omega_c=1$,
equal to the no-signalling bound, so in some sense it is trivial, but it is 
worth noting that the Bell inequality is not a trivial one ($|c_{xy}|\leq 1$).}
\end{proof}

\medskip 
{We can get a slightly better bound, sometimes much better depending on the game 
matrix $\Phi$, by exploiting the fact that the latter is Hermitian and that 
the optimal local strategies must lie either in $\text{Eig}(\lambda)$ or 
in $\text{Eig}(-\lambda)$; denote their dimensions by $k$ and $\ell$, respectively, 
so that $k+\ell \leq 2^n$.
In the following we can discard the extreme cases $k=2^n$ and $\ell=2^n$,
since those correspond to $\Phi \propto \1$, which we have just discussed. 

From the previous analysis, we have 
\begin{align}
  \mathcal{F}_{NLC} 
       &= \conv \Bigl\{(\ket{\alpha_c}\!,\epsilon\ket{\alpha_c}\!,\epsilon\ketbra{\alpha_c}{\alpha_c}) 
                      : \epsilon=\pm 1, \Bigr. \nonumber \\
        \label{eq:face}
       &\phantom{=======:}\Bigl.
                        \ket{\alpha_c} \in \text{Eig}(\epsilon\lambda) \cap \{\pm 1\}^{2^n} \Bigr\} \\
       &\subset \spn \Bigl\{(\ket{\alpha}\!,\epsilon\ket{\alpha}\!,\epsilon\ketbra{\alpha}{\alpha}) 
                      : \epsilon=\pm 1, \Bigr. \nonumber \\
        \label{eq:span}
       &\phantom{=============:}\Bigl.
                        \ket{\alpha} \in \text{Eig}(\epsilon\lambda) \Bigr\} \\
       &\!\!\!\!\!\!\!\!\!\!\!\!\!\!
        = \spn \Bigl\{(\ket{\alpha}\!,-\ket{\alpha}\!,-\ketbra{\alpha}{\alpha}) 
                                      : \ket{\alpha} \in \text{Eig}(-\lambda) \Bigr\} \nonumber \\
        \label{eq:oplus-span}
       &\!\!\!\!\!\!\!
        \oplus 
          \spn \Bigl\{(\ket{\alpha}\!,\ket{\alpha}\!,\ketbra{\alpha}{\alpha}) 
                                      : \ket{\alpha} \in \text{Eig}(\lambda) \Bigr\}.
\end{align}
Note that the linear span, in Eq. \eqref{eq:span}, denote it $\cL$, has a
dimension at least $1$ larger than the face $\mathcal{F}_{NLC}$, because the
affine span of the latter does not contain the origin. This is due to the 
fact that otherwise the optimal classical winning probability were $\frac12$, 
corresponding to a bias $0$, but it is easily seen that XOR game always have 
some positive bias \cite{PR-limits}.

Now, the spaces in Eq. \eqref{eq:oplus-span} have dimension $\ell+\frac12 \ell(\ell+1)$
and $k+\frac12 k(k+1)$, respectively, and so 
\begin{equation}\begin{split}
  \label{eq:k-l-bound}
  \dim\cF_{NLC} &\leq \dim \cL - 1 \\
                &=    k+\ell + \frac{k(k+1)}{2}+\frac{\ell(\ell+1)}{2} - 1.
\end{split}\end{equation}
Among the pairs with $k+\ell \leq 2^n$, and -- as explained before -- excluding 
$\Phi\propto\1$, i.e. imposing $k,\ell < 2^n$, the r.h.s. is maximised 
at $k=2^n-1$, $\ell=1$, for which values it reproduces the previously obtained 
bound $2^n + 2^{n-1}(2^n-1)$. Note that this restricts the game severely,
since the game matrix has only the two possible eigenvalues $\pm\lambda$, 
w.l.o.g. with multiplicities $2^n-1$ and $1$, respectively:
this means $\Phi = \lambda F = \lambda(2\Pi_{\lambda}-\1)$ and 
$\1-\Pi_{\lambda}$ has rank $1$. 
In all other cases, Eq. \eqref{eq:k-l-bound} is strictly better than the 
bound \eqref{NLC smaller dimension}. 
For the correlation polytope, we get similarly that the dimension of the
face is upper bounded by 
$\frac{k(k+1)}{2}+\frac{\ell(\ell+1)}{2} - 1 \leq 2^{n-1}(2^n-1)$.

These bounds can be attained, if not exactly, then asymptotically, as
we will show on the example $k=2^n-1$, $\ell=1$. By the analysis
of \cite{winter_quantum_2010}, this essentially leaves only the game matrix
$\Phi = \lambda(\1-2^{-n}J) - \lambda2^{-n}J$, where $J=\ketbra{\bar 1}{\bar 1}$ is the all-one-matrix, and $\ket{\bar 1}=\sum_{x=1}^{2^{n}} \ket{x}.$
Since the sum of the absolute values of the entries of $\Phi$ has be to
$1$, this fixes the value of $\lambda = \frac{1}{3\cdot 2^n - 4}$,
corresponding to the winning probability (classical and quantum) 
$\omega^* = \frac12\left( 1 + \frac{2^n}{3\cdot 2^n - 4}\right) \approx \frac23 < 1$
for $n \geq 2$. I.e. the game, and with it its Bell inequality, is nontrivial.
The game can be described as follows: $x$ and $y$ are jointly distributed  
according to $q(x,y)>0$, and to win, Alice and Bob have to output the same 
bit $a=b$ if $x=y$, and different bits $a\neq b$ if $x\neq y$.
The optimal classical local strategies are on the one hand 
$\ket{\alpha_c} = \ket{\bar 1} = -\ket{\beta_c}$
(corresponding to the single negative eigenvalue $-\lambda$ of $\Phi$),
and $\ket{\alpha_c} = \ket{\beta_c} \perp  \ket{\bar{1}}$
(corresponding to the $(2^n-1)$-fold eigenvalue $\lambda$). The latter
means that $\ket{\alpha_c} = \ket{\beta_c}$ has to have exactly $2^{n-1}$
entries $+1$ and $2^{n-1}$ entries $-1$, of which there are ${2^n \choose 2^{n-1}}$. 
With this, we determine the dimension of the corresponding face of the correlation Bell 
polytope as
\[
  \dim\cF = 1 + \dim \aff \left\{ \ketbra{\alpha_c}{\alpha_c} 
                                  : \ket{\alpha_c} \perp (\ket{0}+\ket{1})^{\otimes n} \right\},
\]
where it is understood that $\ket{\alpha_c} \in \{\pm 1\}^{2^n}$. The 
affine span on the r.h.s. is precisely the space $\mathcal{G}_0$ of real 
symmetric matrices with $1$'s along the diagonal and with the property that 
all row and column sums are $0$. By parameter counting, it is straightforward
to see that $\dim\mathcal{G}_0 = 2^{n-1}(2^n-3)$, and so we get 
$\dim\cF = 1 + 2^{n-1}(2^n-3) \sim 2^{n-1}(2^n-1)$, matching the upper 
bound to leading order. 
\hfill\qed}

\end{document}